\documentclass[11pt,letterpaper]{article}

\usepackage{url}
\usepackage{fullpage}

\usepackage{amsmath}
\usepackage{algorithm}
\usepackage[noend]{algpseudocode}
\usepackage{amssymb}
\usepackage{amsthm}
\usepackage{color}
\usepackage{array}
\usepackage{xy}
\usepackage{setspace}
\usepackage{varwidth}
\usepackage{multicol}
\usepackage{algorithm}
\usepackage{hyperref}
\usepackage{thm-restate}
\usepackage{todonotes}

\newcommand{\poly}{\operatorname{poly}}

 \newcommand{\E}{\mathbb{E}}

\newtheorem{thm}{Theorem}[section]

\theoremstyle{remark}

\newtheorem{theorem}{Theorem}[section]

\newtheorem{lemma}[thm]{Lemma}

\theoremstyle{remark}

\renewcommand{\paragraph}[1]{\vspace{.2 cm} \noindent \textbf{#1}}

\usepackage{authblk}

\renewcommand{\epsilon}{\varepsilon}

\begin{document}
\title{An Analysis of Brent's Insertion Method for Hash Tables}
\author{William Kuszmaul}
\affil{Carnegie Mellon University}
\date{}
\maketitle

\begin{abstract}
    In 1968, Richard P.~Brent introduced a new way of building a hash table that, at least empirically, achieves a remarkable property: Even if the hash table is filled to 100\% full, the expected time to query a \emph{random key out of those present} is $O(1)$.
    
    Despite the simplicity of Brent's method, the guarantees of the method have never been formally analyzed. This is due to the subtle issue of handling \emph{spoiled randomness}. The algorithm will sometimes try to use hash functions $h_j$ on keys $y$ that it has already probed in the past. When the algorithm does this, we cannot treat the hash function as random, because its random bits have already affected the state of the table. This issue makes Brent's method surprisingly subtle to reason about formally.

    In this note, we give a simple and formal analysis of Brent's hash table. The analysis can be taught in a graduate randomized algorithms course, and provides a nice example of how to deal with subtle issues in a probabilistic analysis (namely, the issue of spoiled randomness).
\end{abstract}

\section{Introduction}

In 1968, Richard P.~Brent introduced a new way of building a hash table that, at least empirically, achieves a remarkable property: Even if the hash table is filled to 100\% full, the expected time to query a \emph{random key out of those present} is $O(1)$ \cite{Brent68}. 

Brent's method works as follows. Each key $x$ hashes to a \emph{probe sequence} $h_1(x),h_2(x), \ldots \in [n]$ of independent and uniformly random positions in an array $A$ of size $n$. As in a standard random probing hash table \cite{KnuthVol3}, a query for a key $x$ simply examines the positions $h_1(x), h_2(x), \ldots$ in order, until it finds the key.\footnote{If a key is not present, the query can complete once it encounters a free slot in the sequence.} What makes Brent's method interesting is how it performs insertions. The insertion of a key $x$ proceeds in rounds, where round $i$ tries to either insert $x$ directly in position $h_i(x)$ (if the position is free) or to displace some other key $y$ from one of positions $h_j(x) \in \{h_1(x), h_2(x), \ldots, h_{i-1}(x)\}$ and to then move $y$ forward in its own probe sequence by $i - j$ positions (if that position is free). See Algorithm \ref{alg:brent-insert}.

\begin{algorithm}
\caption{Brent's Insertion Method}
\label{alg:brent-insert}
\begin{algorithmic}[1]
\Procedure{Insert}{$x$}
\For{$i = 1, 2, \ldots$}\Comment{Round $i$}
    \If{$A[h_i(x)]$ is empty}
        \State Insert $x$ in position $h_i(x)$
        \State \Return
    \EndIf
    \For{$j = 1, 2, \ldots, i-1$}
        \State $y \gets A[h_j(x)]$
        \State Let $k$ be the smallest $k$ such that $h_k(y) = h_j(x)$ \Comment{$y$'s current position is $h_k(y)$}
        \If{$A[h_{k +i-j}(y)]$ is empty}
            \State Move $y$ to position $h_{k + i-j}(y)$
            \State Insert $x$ in position $h_j(x)$
            \State \Return
        \EndIf
    \EndFor
\EndFor
\EndProcedure
\end{algorithmic}
\end{algorithm}

Brent was interested in the \emph{amortized expected query time} of the hash table, which is the time to query a \emph{random key out of those present}, after the hash table has been filled to be a $1 - \epsilon$ fraction full for some parameter $\epsilon$. Whereas most hash tables have an amortized expected query time that is a function of $\epsilon^{-1}$, Brent hypothesized that his method should have an amortized expected query time of $O(1)$, \emph{regardless of the parameter $\epsilon$}.

Brent's hypothesis, although supported by both experimental and heuristical evidence \cite{Brent68,RobinHood86,GonnetMunro79}, is surprisingly subtle to analyze formally. Despite a great deal of additional work on the topic, including numerous extensions of Brent's technique \cite{GonnetMunro79,Mallach77,Lyon78,Poblete77}, and even a survey on the topic \cite{MunroCelis85}, the basic task of analyzing Brent's method has remained open \cite{MunroCelis85}.

In this note, we give a formal proof that Brent's method achieves amortized expected query time $O(1)$. 

\begin{theorem}
Let $\epsilon \in (0, 1/2)$. Suppose Brent's method is used to insert $(1 - \epsilon)n$ keys into a hash table of size $n$. Then the expected time to query a random key out of those present is $O(1)$. Moreover, if the hash table is not yet full, then the expected time to perform an additional insertion is $O(\epsilon^{-1})$.
\label{thm:main}
\end{theorem}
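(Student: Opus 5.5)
The plan is to reduce everything to one quantity: the expected total number of rounds $R$ used by all $(1-\epsilon)n$ insertions. An insertion that succeeds in round $i$ places its keys in probe positions whose combined offset is exactly $i$: either $x$ goes to position $h_i(x)$, or $x$ goes to $h_j(x)$ and $y$ moves forward by $i-j$. So if $\Phi$ denotes the sum, over all keys present, of the index of the probe each key currently occupies, each insertion raises $\Phi$ by at most its number of rounds. Summing, $\Phi \le R$ at the end. The average query time is $\Phi/((1-\epsilon)n)$. Hence it suffices to show $\E[R] = O(n)$, or more precisely that inserting the $m$-th key with $\delta = 1 - m/n$ fraction free takes $O(\min(\delta^{-1}, \cdot))$ expected rounds, in a form that sums to $O(n)$. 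The natural target is $O(\delta^{-1/2})$ expected rounds when a $\delta$ fraction is free. Indeed $\sum_m (n/(n-m))^{1/2} = O(n)$, and this would also give the second claim, since $\delta^{-1/2} \le \delta^{-1}$ and each round costs $O(i)$ work, so that part can be proved separately.

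The heart of the argument is showing that round $i$ succeeds with probability roughly $i\delta$. In round $i$ the algorithm makes $i$ attempts: the fresh probe $h_i(x)$, plus one probe $h_{k+i-j}(y_j)$ for each earlier displaced candidate $y_j$. If every attempt were an independent uniform slot, each would be free with probability $\delta$. Failing through $i$ rounds would then have probability about $\prod_{i'\le i}(1-\delta)^{i'} \approx e^{-\delta i^2/2}$, giving $O(\delta^{-1/2})$ expected rounds.

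\textbf{The obstacle is spoiled randomness.} The probe $h_{k+i-j}(y)$ may already have been examined, either during $y$'s own insertion or during an earlier insertion that tried to displace $y$. Its outcome then correlates with the current table state.

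I would handle this with a ``fresh probe'' accounting argument. Call a probe $h_\ell(y)$ \emph{spoiled} if it has ever been evaluated before the current insertion of $x$, and \emph{fresh} otherwise. A fresh probe is independent of everything so far, so it hits a free slot with probability exactly $\delta$. I would then lower bound the number of fresh probes available in round $i$. Suppose key $y$ sits at probe index $k$. Over the whole history, the probes of $y$ that have ever been evaluated are at indices $\le k + (\text{number of rounds spent by insertions that examined }y)$, so only a bounded number of $y$'s future probes are spoiled. To make this clean, I would charge spoiled probes to past rounds. Each past round evaluates $O(i')$ probes, so the total number of spoiled (key, index) pairs is at most $\sum (\text{rounds})^2$ over past insertions. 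A key $y$ hit by $x$'s probe at step $j$ is a uniformly random occupied key, because $h_j(x)$ is fresh. Therefore the expected number of spoiled probes among the $i$ attempts of round $i$ is at most $i$ times the average number of spoiled probes per key, weighted appropriately.

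I would run an induction on the insertions showing that the average number of spoiled future probes per key stays $O(1)$. That average is controlled by $\E[R]/n$-type quantities, which closes the induction. Consequently at least $i/2$ of the round-$i$ attempts are fresh with constant probability, which yields the $e^{-\Omega(\delta i^2)}$ tail. One technical point needs care: $h_j(x)$ lands on slot $y$ uniformly, but conditioning on earlier failures biases which $y$ we see. I would avoid this by union-bounding over spoiled attempts rather than conditioning, and treating each fresh attempt as a Bernoulli($\delta$) trial revealed in order.

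Finally, I would combine the pieces. The per-insertion bound of $O(\delta^{-1/2})$ rounds sums to $O(n)$ total rounds, so $\Phi = O(n)$ in expectation, giving $O(1)$ average query time. For the extra insertion at fill $1-\epsilon$, the expected work is $\sum_i i\cdot\Pr[\text{reach round } i] = O(\sum_i i e^{-\Omega(\epsilon i^2)}) = O(\epsilon^{-1})$.
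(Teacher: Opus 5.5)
Your skeleton matches the paper's. The sum of query times grows by exactly the number of rounds, and an $e^{-\Omega(\delta i^2)}$ tail gives $O(\delta^{-1/2})$ rounds per insertion. Probes split into fresh and spoiled, and displaced candidates are uniform because $h_j(x)$ is fresh. The gaps are in the two steps that carry the weight.

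\textbf{First gap: the bound on spoiled probes.} The invariant you plan to maintain by induction is that the average number of spoiled future probes per key stays $O(1)$, controlled by $\E[R]/n$. This is false.
\begin{itemize}
\item Your own charging bounds the spoiled pairs by $\sum r^2$, not by $R=\sum r$. An insertion at free fraction $\delta$ has $\E[r^2]=\Theta(\delta^{-1})$, so that bound is of order $n\log\delta^{-1}$.
\item That order is the truth. Spoiled probes always fail, since a probed slot stays occupied forever. Every fresh probe that hits an empty slot fills it. So the fresh probes form an i.i.d.\ uniform sequence whose distinct values are exactly the occupied slots.
\item Hence the number of spoiled pairs equals the coupon-collector time to cover $(1-\delta)n$ slots, which is $\Theta(n\log\delta^{-1})$. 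At most $\Phi$ of these pairs lie at or below a key's current index, and $\E[\Phi]=O(n)$. So spoiled future probes average $\Theta(\log\delta^{-1})$ per key.
\end{itemize}
The missing idea is to bound spoiled pairs directly, without induction and without reference to rounds. This is the paper's coupon-collector lemma. Equivalently, each insertion stops at its first successful fresh probe, so it makes a geometric number of fresh probes with mean $n/(n-m)$, and these sum to $O(n\log\delta^{-1})$. You then live with an $O(\log\delta^{-1})$ average, which is harmless: a candidate with $O(\log\delta^{-1})$ spoiled hashes loses only that many of its probes, and $\log\delta^{-1}\ll\sqrt{\delta^{-1}}$.

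\textbf{Second gap: concentration.} The claim that ``at least $i/2$ of the round-$i$ attempts are fresh with constant probability'' does not yield an $e^{-\Omega(\delta i^2)}$ tail.
\begin{itemize}
\item The candidates $u_1,u_2,\ldots$ are the same keys in every round. Freshness is therefore strongly correlated across rounds, and constant-probability events cannot be multiplied.
\item Markov's inequality on the total number of spoiled attempts in the first $i$ rounds (mean $O(i\log\delta^{-1})$) gives only a failure probability of order $\log\delta^{-1}/i$, which is not summable.
\end{itemize}
The paper instead reveals $h_1(x),\ldots,h_j(x)$ up front, so that, given the history, $P(u_1),\ldots,P(u_j)$ are independent with mean $O(\log\delta^{-1})$. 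Markov makes each candidate good (at most $2c\log\delta^{-1}$ spoiled hashes) with probability at least $1/2$. Chernoff then gives at least $j/4$ good candidates with probability $1-2^{-\Omega(j)}$. The good candidates alone supply $\Omega(j^2)-O(j\log\delta^{-1})=\Omega(j^2)$ fresh probes in the first $j$ rounds, each succeeding independently with probability $\delta$. Fixing the candidate set before examining any of their probes also removes the conditioning bias you were worried about.

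\textbf{Minor point on insertion time.} A round costs $O(i)$ only if each candidate's current index $k$ is known. Computing $k$ means rescanning that key's probe sequence. The paper charges this to the $O(1)$ expected query time of a random key.
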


In addition to bounding the amortized expected query time,  Theorem \ref{thm:main} also obtains a tight bound on expected query time, which resolves a conjecture by Celis and Munro \cite{MunroCelis85} that the expected insertion time is $O(1)$ at load factors $1 - \Omega(1)$. Interestingly, our insertion time bound also offers somewhat of a counterbalance to Knuth \cite{KnuthVol3}, who describes Brent's method as ``doing more work when inserting an item, moving records in order to reduce the expected retrieval time.'' Theorem \ref{thm:main} tells us that, at least asymptotically, the insertion time for Brent's method is the same as that for classic random probing. 

The proof of Theorem \ref{thm:main} is designed to be teachable as a single lecture in a graduate Randomized Algorithms course. It serves as a good example of how to deal with subtle issues in a probabilistic analysis (namely, the issue of \emph{spoiled randomness}), while also showcasing a simple and beautiful algorithm from the early data-structures literature. 

In the decades since Brent's original paper, there has been a huge body of work on how to build time- and space-efficient hash tables (see, e.g., \cite{fiat1988non,fiat1993implicit,RamanRamanRao03,FotakisPaghSandersSpirakis05,DietzfelbingerWeidling07,demaine2006dictionariis,ArbitmanNaorSegev10,bender2022linear,BenderFarachColtonKuszmaul22,IcebergHashing23,bercea2023dynamic,LiLiangYuZhou23,LiLiangYuZhou24,braverman2024tight,bender2025optimal}), leading ultimately to the development of matching upper and lower bounds for the optimal time-space tradeoffs achievable by \emph{any} abstract data structure that solves the hash table problem \cite{BenderFarachColtonKuszmaul22,LiLiangYuZhou23,LiLiangYuZhou24}. For open-addressing hash tables, it is only in the last few years that we have seen examples of hash table analyses that formally achieve amortized expected query time $O(1)$ \cite{BenderKuszmaulZhou24,FarachColtonKrapivinKuszmaul24}. More than half a century after Brent's contribution, it remains the case that his method is by far the simplest known technique for achieving the guarantee. The results in this paper show that this simple algorithm also has a relatively clean and simple analysis. 

\section{The Analysis}

In this section, we prove Theorem \ref{thm:main}. Before diving into the proof, let us take a moment to first understand why, \emph{intuitively}, Brent's method should achieve amortized expected query time $O(1)$.

Consider an insertion taking place when the hash table is a $1- \epsilon$ fraction full. In the first $i$ rounds, the insertion probes $\Theta(i^2)$ different positions to check if they are free. If we imagine that each of these probes has an $\epsilon$ probability of success, then it stands to reason that the insertion should complete after roughly $O(\sqrt{\epsilon^{-1}})$ rounds. By design, if the insertion completes in the $r$-th round, for some $r$, then it increases the \emph{sum of the query times of all elements present} by exactly $r$. This suggests that the increase to query times from an insertion at load factor $1 - \epsilon$ is roughly $O(\sqrt{\epsilon^{-1}})$. Integrating over insertions, even if we fill the hash table to 100\% full, the amortized expected query time should be roughly 
$$\int_0^{1} \sqrt{\epsilon^{-1}} \, d\epsilon = O(1).$$

What makes this intuition tricky to formalize is the issue of \emph{spoiled randomness}. For each key $y$ in the hash table, and each hash function $h_j$, the \emph{first time} that we ever probe $h_j(y)$ we can think of the hash function as providing ``fresh randomness''. If, at that point in time, the hash table is $1 - \epsilon$ full, then the probe has an $\epsilon$ probability of finding a free slot. Notice, however, that for the same key $y$, there may be \emph{many insertions} that consider moving $y$ to position $h_j(y)$. After the first such attempt, we can no longer think of subsequent probes to position $h_j(y)$ as examining a random slot. Rather, such probes are ``wasted'' on a slot that we already know to be occupied. 

As a convention, for a given key $z$ and hash function $h_i$, we say that $h_i$ is \emph{fresh} for key $z$ if it has never yet been probed by any insertion, and is \emph{spoiled} for key $z$ if some past insertion has already probed it. Likewise, when an insertion probes $h_i(z)$ for some key $z$ and hash function $h_i$, we say that the probe is \emph{fresh} if $h_i$ is fresh for $z$, and that the probe is \emph{spoiled} otherwise. Spoiled probes are guaranteed to fail, since (unbeknownst to the current insertion) the same probe has already been performed in the past, and the slot being probed is guaranteed to be occupied.

To bound the effect of spoiled probes on the analysis, we make use of the following result, which is a standard generalization of the classic coupon-collector lemma to the case where we wish to collect only a $(1 - \epsilon)$ fraction of coupons (see, e.g., \cite{FarachColtonKrapivinKuszmaul24} for a simple proof).

\begin{lemma}
\label{lem:coupon-collector}
Let $n \in \mathbb{N}$ and let $1/n \le \epsilon \le 1/2$. Let $r_1, r_2, \ldots$ be independent and uniformly random positions in $[n]$. Let $T$ be the smallest index such that $$\left|\bigcup_{i = 1}^T \{r_i\}\right| \ge (1 - \epsilon)n.$$ Then, with probability $1 - 1 / \poly(n)$, we have $T = \Theta(n \log \epsilon^{-1})$.
\end{lemma}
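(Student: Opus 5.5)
This is the coupon-collector lemma for collecting a $(1-\epsilon)$ fraction of coupons. I would prove it by decomposing $T$ into a sum of independent geometric waiting times and then applying a concentration bound. Write $T = \sum_{k=0}^{m-1} G_k$, where $m = \lceil (1-\epsilon)n\rceil$ and $G_k$ is the number of draws needed to go from $k$ distinct positions to $k+1$. By the standard argument, the $G_k$ are independent geometric random variables with success probability $p_k = (n-k)/n$.

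\textbf{The mean.} We have
\[
\E[T] = \sum_{k=0}^{m-1} \frac{n}{n-k} = n\left(H_n - H_{n-m}\right),
\]
where $H$ denotes the harmonic numbers. Since $n - m \approx \epsilon n \ge 1$, this equals $n(\ln(1/\epsilon) + O(1))$. Because $\epsilon \le 1/2$, we have $\ln(1/\epsilon) \ge \ln 2$, so the $O(1)$ term is absorbed and $\E[T] = \Theta(n\log \epsilon^{-1})$.

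\textbf{Concentration.} Every $p_k \ge \epsilon n/n = \epsilon$, roughly. I would use a Chernoff-type tail bound for sums of independent geometric variables. A clean choice is the bound of Janson, which gives
\[
\Pr\bigl[T \ge \lambda \E[T]\bigr] \le \exp\bigl(-p_{\min}\E[T](\lambda - 1 - \ln\lambda)\bigr)
\]
and an analogous lower-tail bound. The exponent involves $p_{\min}\E[T] \approx \epsilon n\log\epsilon^{-1}$. This quantity is large unless $\epsilon$ is close to $1/n$, and there it degrades to roughly $\log n$. With $\lambda$ a suitable constant, that still yields failure probability $1/\poly(n)$.

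\textbf{Avoiding the weak exponent.} The main obstacle is the regime $\epsilon n = O(1)$, where the last few geometric variables have mean about $n$ and so have heavy fluctuations relative to $n$. There I would split the sum into two parts. For the first $n/2$ terms, $p_k \ge 1/2$, and a standard Chernoff bound gives concentration to within $O(n)$ with exponentially small failure probability. For the remaining terms, the contribution is $n\log\epsilon^{-1}$ times a constant. Alternatively, one can argue directly that each position is missed with probability $(1-1/n)^{T}$ and apply a union bound or a negative-association Chernoff bound to the number of uncollected positions after $cn\log\epsilon^{-1}$ draws. The count of uncollected positions is a sum of negatively associated indicators with mean $n\epsilon^{c}$. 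Choosing $c$ as a large or a small constant makes this mean either much smaller or much larger than $\epsilon n$, which gives the upper and lower bounds on $T$. The lower tail is easy because $\E$ of the uncollected count at $c<1$ is $\epsilon^{c}n \gg \epsilon n$.

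The step needing care is getting $1/\poly(n)$ failure probability when $\epsilon n$ is small. Here I would take $c = \Theta(\log n/\log\epsilon^{-1})$ if needed. This still gives $O(n\log \epsilon^{-1})$ whenever $\epsilon \le n^{-\Omega(1)}$, and otherwise $\epsilon n$ is polynomially large so Chernoff suffices.
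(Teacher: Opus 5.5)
Your proof is correct. The paper does not prove this lemma itself; it defers to \cite{FarachColtonKrapivinKuszmaul24}. So there is no in-paper argument to match, and your geometric decomposition with Janson's tail bounds stands on its own as a complete proof.

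The one fact that makes it work, which you state only informally, is that $\epsilon n\ln\epsilon^{-1}\ge \min\{\ln n,\tfrac n2\ln 2\}=\ln n$ for all $\epsilon\in[1/n,1/2]$ and $n\ge 4$. This holds because $\epsilon\mapsto\epsilon\ln\epsilon^{-1}$ increases on $(0,1/e]$ and decreases on $[1/e,1/2]$. Since $p_{\min}=(\lfloor\epsilon n\rfloor+1)/n>\epsilon$, you get $p_{\min}\E[T]=\Omega(\log n)$ uniformly. A large (resp.\ small) constant $\lambda$ then gives the upper (resp.\ lower) tail with failure probability $n^{-C}$ for any desired constant $C$.

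So the regime $\epsilon n=O(1)$ is not an obstacle, and your ``avoiding the weak exponent'' paragraph can be dropped. That is just as well, because the split at $n/2$ as written only bounds the \emph{expected} contribution of the second half of the sum, not its tail. If you prefer the elementary occupancy route, the same inequality shows that a constant $c$ works throughout, with no case split on $c=\Theta(\log n/\log\epsilon^{-1})$. Take $k=\lfloor\epsilon n\rfloor+1$ and $t=\lceil cn\ln\epsilon^{-1}\rceil$. A union bound gives $\Pr[T>t]\le\binom{n}{k}(1-k/n)^t\le(e\epsilon^{c-1})^{k}\le\exp\bigl(-\epsilon n((c-1)\ln\epsilon^{-1}-1)\bigr)\le n^{-(c-1)/2}$ once $(c-1)\ln 2\ge 2$. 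The lower tail on that route needs only a constant-factor gap, since $\epsilon^{c-1}\ge 2^{1-c}$ rather than being ``much larger''. This suffices together with the fact that the mean number of uncollected positions is at least $n^{1-c}$.

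One small slip in your computation of the mean. Absorbing the $O(1)$ into $\ln\epsilon^{-1}\ge\ln 2$ only justifies the upper bound on $\E[T]$, since the additive constant could be negative. You need the lower bound too, both for the lower tail and for the exponent $p_{\min}\E[T]$. For $\epsilon\le 1/4$, use $\E[T]\ge n\ln\frac{n+1}{\lfloor\epsilon n\rfloor+1}\ge n\ln\frac{1}{2\epsilon}\ge\frac n2\ln\epsilon^{-1}$. For larger $\epsilon$, use the trivial bound $T\ge m\ge n/2$.
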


Intuitively, the coupon-collector lemma can be used to bound the number of pairs $(h_j, y)$ such that $h_j$ is already spoiled for $y$. It tells us that, when the hash table is $1 - \epsilon$ full, the total number of such pairs is $O(n \log \epsilon^{-1})$. This means that, for a \emph{random} element $y$ in the table, $y$ is only involved in $O(\log \epsilon^{-1})$ such pairs in expectation. Thus, on a given insertion, and for a given element $y$ that the insertion displaces, we only expect to incur roughly $O(\log \epsilon^{-1})$ spoiled probes on that element. This basic insight is the starting point of the analysis, allowing us to prove the following lemma, which bounds the number of rounds that a given insertion takes. 
\begin{lemma}
Suppose that after inserting $(1 - \epsilon)n$ elements $z_1, z_2, \ldots, z_{(1 - \epsilon)n}$, we insert an additional element $z$. For all $j \ge \sqrt{\epsilon^{-1}}$, the probability that the insertion requires at least $j$ rounds is at most $1/e^{\Omega(\epsilon j^2)} + 1/\poly(n)$.
\label{lem:rounds}
\end{lemma}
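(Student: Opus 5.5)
We prove Lemma~\ref{lem:rounds} by showing that the first $j$ rounds of inserting $z$ perform $\Omega(j^2)$ probes, that $\Omega(j^2)$ of them are fresh (with at least constant probability per displaced key), and that each fresh probe independently finds a free slot with probability at least $\epsilon$. Then the chance that none succeed is $(1-\epsilon)^{\Omega(j^2)} = e^{-\Omega(\epsilon j^2)}$.

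\textbf{Step 1: controlling the spoiled pairs.} Apply Lemma~\ref{lem:coupon-collector} to the sequence of all fresh probes made during the first $(1-\epsilon)n$ insertions. Each fresh probe is a uniformly random position, so the number of fresh probes before the table has $(1-\epsilon)n$ occupied slots is $O(n\log \epsilon^{-1})$, with probability $1-1/\poly(n)$. The table only fills when some probe hits a free slot, so this bounds the total number $S$ of spoiled pairs $(h_i,y)$ by $O(n\log\epsilon^{-1})$. The exceptional event contributes the $1/\poly(n)$ term in the statement.

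\textbf{Step 2: how much randomness the insertion of $z$ sees.} In rounds $1,\dots,j$, the insertion of $z$ probes $h_1(z),\dots,h_j(z)$, all fresh since $z$ is new. These are independent uniform positions. With high probability (a Chernoff/birthday argument, using $j\le\sqrt n$ or so), at least $j/2$ of $h_1(z),\dots,h_{j/2}(z)$ are distinct occupied slots holding distinct keys $y_1,\dots,y_{j/2}$; otherwise a free slot is found and we are done.

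Crucially, these keys are essentially uniformly random among the occupied keys, because $h_\ell(z)$ is fresh and uniform. In the remaining rounds $i\in[j/2,j]$, key $y_\ell$ is tried at offsets $k_\ell+i-\ell$. This is a window of $j/2$ consecutive hash indices of $y_\ell$.

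\textbf{Step 3: the main obstacle.} We must show that, for a random occupied key $y$, a window of $\Theta(j)$ consecutive indices beyond its current index contains $\Omega(j)$ fresh ones. Let $s_y$ be the number of spoiled indices for $y$. Since $\sum_y s_y = O(n\log\epsilon^{-1})$, a random $y$ has $s_y = O(\log\epsilon^{-1})$ in expectation, and by Markov with probability at least $1/2$.

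Since $j\ge\sqrt{\epsilon^{-1}}\gg\log\epsilon^{-1}$, at most $O(\log\epsilon^{-1})\le j/4$ of the window indices are spoiled, leaving $\Omega(j)$ fresh probes. The subtle point is that the uniformity of $y_\ell$ must be justified even conditioned on the table state. We plan to do this by noting that $h_\ell(z)$ is independent of everything prior. Each $y_\ell$ is therefore an independent uniform occupied key, so by Chernoff a constant fraction of the $y_\ell$ are good except with probability $e^{-\Omega(j)}$. That bound is at most $e^{-\Omega(\epsilon j^2)}$ only when $j\lesssim\epsilon^{-1}$. For larger $j$ we instead use the probes $h_i(z)$ themselves: each is fresh and free with probability $\epsilon$, so failure over $j$ rounds has probability $(1-\epsilon)^{j}$. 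Gluing these regimes (possibly using more displaced keys via a weighted Markov bound) is where we expect the most care.

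\textbf{Step 4: conclude.} Conditioned on the above, reveal the fresh probes one at a time in the order the algorithm makes them. Each is uniform and independent of the history, and the table has at least $\epsilon n$ free slots throughout, since nothing changes until success. So each fresh probe succeeds with probability at least $\epsilon$. With $\Omega(j^2)$ fresh probes, all fail with probability at most $e^{-\Omega(\epsilon j^2)}$. Adding the failure probabilities from Steps 1–3 gives the claim.
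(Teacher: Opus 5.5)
Your Steps 1, 3 and 4 follow the paper's argument. The coupon-collector lemma applied to the fresh probes bounds the spoiled pairs by $O(n\log\epsilon^{-1})$. Each displaced key is an independent uniform occupied key, so by Markov it has $O(\log\epsilon^{-1})$ spoiled indices with probability at least $1/2$. Chernoff gives $\Omega(j)$ such keys except with probability $e^{-\Omega(j)}$, and these keys supply $\Omega(j^2)$ fresh probes, each succeeding with probability $\epsilon$.

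The gap is the one you flag and leave open. When $j \gg \epsilon^{-1}$, the Chernoff term $e^{-\Omega(j)}$ is far larger than $e^{-\Omega(\epsilon j^2)}$, so the stated tail is not established there. Your patch does not close it: $z$'s own $j$ fresh probes give only $(1-\epsilon)^j \ge 2^{-j}$, which, like the Chernoff term, is merely exponentially small in $j$. A sharper Markov-type estimate cannot close it either, as long as the only fact you use about the history is $\sum_y s_y = O(n\log\epsilon^{-1})$. That bound permits $\Theta(n/j)$ keys whose next $j$ hash indices are all spoiled. With probability $(\Theta(1/j))^{j} = e^{-O(j\log j)}$, all of $h_1(z),\dots,h_j(z)$ land on such keys, and then every probe in the first $j$ rounds fails. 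This exceeds $e^{-\epsilon j^2}$ once $\epsilon j \gg \log j$, until the $1/\poly(n)$ term takes over. This does not show the real process produces such tables, only that the aggregate bound cannot exclude them.

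The paper's proof stops at the same point. Its last line bounds the failure probability by $1/\poly(n) + e^{-\Omega(j)} + e^{-\Omega(\epsilon j^2)}$ and rewrites this as $e^{-\epsilon\Omega(j)} + 1/\poly(n)$, which is weaker than the lemma's claim. What both arguments actually prove, for every $j \ge \sqrt{\epsilon^{-1}}$ and with no gluing of regimes, is $e^{-\Omega(j)} + e^{-\Omega(\epsilon j^2)} + 1/\poly(n)$. That is all Theorem~\ref{thm:main} needs, since $\sum_j \bigl(e^{-\Omega(j)} + e^{-\Omega(\epsilon j^2)}\bigr) = O(\epsilon^{-1/2})$ and $\sum_j j\bigl(e^{-\Omega(j)} + e^{-\Omega(\epsilon j^2)}\bigr) = O(\epsilon^{-1})$. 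So the right repair is to state the lemma in that form rather than search for a gluing argument.

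Separately, in Step 2 you should not require all of $h_1(z),\dots,h_{j/2}(z)$ to be distinct, because the birthday failure probability $\Theta(j^2/n)$ is not within the allowed error. You only need a constant fraction of the displaced keys to be distinct, and that fails with probability at most $(O(j/n))^{\Omega(j)}$.
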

\begin{proof}
Let $P(z_i)$ denote the number of hash functions $h_1, h_2, \ldots$ that are already spoiled for $z_i$ (they have already been probed by at least one past insertion). Then, by the generalized coupon-collector lemma (Lemma \ref{lem:coupon-collector}), with probability $1 - 1 / \poly(n)$, we have 
\begin{equation}
\sum_{i = 1}^{(1 - \epsilon)n} P(z_i) = O(n \log \epsilon^{-1}).
\label{eq:coupon-collector-sum}
\end{equation}
For the rest of the proof, let us condition on some fixed outcome of the first $(1 - \epsilon)n$ insertions such that \eqref{eq:coupon-collector-sum} holds. The only randomness that we will use in our analysis is the randomness of the hashes $h_i(z)$ for the element $z$ currently being inserted, and the randomness of fresh probes (i.e., of $h_i(z_j)$ for hash functions $h_i$ and keys $z_j$ such that $h_i$ is not yet spoiled for $z_j$). 

Now, consider the elements $u_1, u_2, \ldots, u_j$ in positions $h_1(z), h_2(z), \ldots, h_j(z)$ that we consider displacing during the $j$-th round of the insertion. If position $h_j$ is empty, then define $u_j = \text{null}$ and say as a convention that $P(u_j) = 0$. Because $h_1, h_2, \ldots, h_j$ are independent random hash functions (and because we have already conditioned on outcomes for $P(z_i)$ for all $i$), the random variables $P(u_1), P(u_2), \ldots, P(u_j)$ are independent. Moreover, by \eqref{eq:coupon-collector-sum}, each $P(u_j)$ has expected value at most $c \log \epsilon^{-1}$ for some positive constant $c$.  By Markov's inequality, it follows that $\Pr[P(u_i) \ge 2c \log \epsilon^{-1}] \le 1/2$. Thus, each $P(u_i)$ is independently at most $2c \log \epsilon^{-1}$ with probability at least $1/2$.

Recall by a Chernoff bound that, if we flip $k$ independent coins, each of which has probability at least $1/2$ of being heads, then with probability at least $1 - 1/2^{\Omega(k)}$, at least $k/4$ of the coins are heads. Applying this to the random variables $P(u_1), P(u_2), \ldots, P(u_j)$, where a coin flip corresponds to the event $P(u_i) \ge 2c \log \epsilon^{-1}$, we can conclude that, with probability at least 
\begin{equation}
1 - 1/2^{\Omega(j)},
\label{eq:chernoff-bound}
\end{equation}
at least $j/4$ of the values $P(u_1), \ldots, P(u_j)$ are at most $2c\log \epsilon^{-1}$. Condition on this event, and label these $j/4$ elements by $u_1', u_2', \ldots, u_{j/4}'$.

If any of $u_1', u_2', \ldots, u_{j/4}'$ are null (i.e., they correspond to an empty position), then the insertion will trivially succeed by the end of the $j$-th round. Otherwise, $u_1', u_2', \ldots, u_{j/4}'$ are each elements whose hash functions $\{h_i \mid i > 2c \log \epsilon^{-1}\}$ have never yet been probed. The first $j$ rounds of the insertion are therefore able to make at least 
$$\sum_{s = 1}^{j/4} (s - 2 c \log \epsilon^{-1}) = \Omega(j^2) - O(j \log \epsilon^{-1})$$
\emph{fresh} probes. Since $j \ge \omega(\log \epsilon^{-1})$, this amounts to at least $\Omega(j^2)$ fresh probes. Each of these probes independently has probability $\epsilon$ of finding a free slot, so the probability of the round failing is at most 
\begin{equation}
(1 - \epsilon)^{\Omega(j^2)} = 1/e^{\Omega(\epsilon j^2)}.
\label{eq:ind}
\end{equation}

Putting the pieces together, the overall probability of the insertion failing to complete after $j$ rounds is (from \eqref{eq:coupon-collector-sum}, \eqref{eq:chernoff-bound}, and \eqref{eq:ind}) at most 
$$1/\poly(n) +1/e^{\Omega(j)} + 1/e^{\Omega(\epsilon j^2)} = 1/e^{\epsilon \Omega(j)} + 1/\poly(n),$$
where the $1/\poly(n)$ term comes from the probability of the coupon-collector lemma failing, the $1/e^{\Omega(j)}$ term comes from \eqref{eq:chernoff-bound}, and the $1/e^{\Omega(\epsilon j^2)}$ term comes from the final step of the analysis. 
\end{proof}

With the help of Lemma \ref{lem:rounds}, we can prove Theorem \ref{thm:main} as follows.

\begin{proof}[Proof of Theorem \ref{thm:main}]
We begin by analyzing an insertion of an element $z$ that takes place when the hash table is $1 - \epsilon$ full. By Lemma \ref{lem:rounds}, the probability that the insertion requires at least $j$ rounds is at most $1/e^{\Omega(\epsilon j^2)} + 1/\poly(n)$. The expected number of rounds that the insertion takes is therefore $O(\sqrt{\epsilon^{-1}})$, while the expected number of probes that the insertion makes is $O(\epsilon^{-1})$ (since the $j$-th round makes $O(j^2)$ probes). To bound the insertion time, we must also include the time to calculate during each round $j$ the current hash function $h_k$ that the element in position $h_j(z)$ is using. Because $A[h_j(z)]$ (if it is not empty) is a random element in the hash table, this is the same time as the expected time to query a random element out of those present, which we will shortly show to be $O(1)$. Thus the overall expected insertion time is $O(\epsilon^{-1})$. 

Now let us consider the expected time to query a random key out of those present, after the hash table has had $(1 - \epsilon)n$ insertions. If the $i$-th insertion takes $r_i$ rounds, then it increases the sum of the query times of all elements present by exactly $r_i$. The expected sum of the query times is therefore 
$$\E\left[\sum_{i = 1}^{(1 - \epsilon)n} r_i\right].$$ 
Since an insertion at $1- \delta$ full takes $O(\delta^{-1/2})$ expected rounds, this is at most 
$$O\left(\sum_{i = 1}^{(1 - \epsilon)n} \sqrt{\frac{n}{n - i}} \right) = O\left(n \int_{\epsilon}^{1} \delta^{-1/2} d\delta\right) = O(n).$$
The expected time to query a random key out of those present is therefore $O(1)$.

\end{proof}

\section*{Acknowledgments}
This work was supported in part by NSF grants CCF-2542165 and CCF-2504471, and by a Jane Street grant.

\bibliographystyle{alpha}
\bibliography{references}

\end{document}